\DeclareMathOperator{\EPR}{EPR}
\title{Improved approximation algorithms for the\\ EPR Hamiltonian}
\author{Nathan Ju\thanks{UC Berkeley, \url{nju@berkeley.edu}}\and Ansh Nagda\thanks{UC Berkeley, \url{anshnagda@gmail.com}}}
\begin{document}
\maketitle 
\begin{abstract}
    The EPR Hamiltonian is a family of 2-local quantum Hamiltonians introduced by King \cite{king2023improved}. We introduce a polynomial time $\frac{1+\sqrt{5}}{4}\approx 0.809$-approximation algorithm for the problem of computing the ground energy of the EPR Hamiltonian, improving upon the previous state of the art of $0.72$ \cite{jorquera2024monogamy}. 
    
    As a special case, this also implies a $\frac{1+\sqrt{5}}{4}$-approximation for Quantum Max Cut on bipartite instances, improving upon the approximation ratio of $3/4$ that one can infer in a relatively straightforward manner from the work of Lee and Parekh \cite{eunou}.
\end{abstract}
\thispagestyle{empty}

\newpage

\clearpage
\setcounter{page}{1}

\section{Introduction}
For a quantum Hamiltonian $H$ and a (mixed) state $\rho$, we say that the \emph{energy} of $\rho$ is $\tr(\rho\cdot H)$.
In this work, we refer to a state achieving maximum energy as the \emph{ground state}\footnote{This is a slight deviation from the usual notion of the ground state being the \emph{minimum} energy state.
We remark that both the notions are equivalent up to replacing $H$ by $-H$; we use the maximization notion because it is notationally convenient.}.
Note that the energy of the ground state, or the ground energy, is equal to the maximum eigenvalue of $H$
\[\lambda_{\max}(H) = \max_{\rho\geq 0, \tr(\rho)=1}\tr(\rho\cdot H).\]
While exactly computing the ground energy of local Hamiltonians is computationally infeasible (QMA-hard), \emph{approximating} the ground energy is possibly tractable and is the focus of this work.
Let $\mathcal{F}$ be a family of local Hamiltonians.
For $\alpha\in [0,1]$, an \emph{$\alpha$-approximation algorithm} for $\mathcal{F}$ is one that, upon input $H\in \mathcal{F}$, outputs a description of a (mixed) state $\rho$ achieving energy
\[\tr(\rho\cdot H)\geq \alpha\cdot \lambda_{\max}(H).\]
We are interested in the question ``what is the supremum of all $\alpha\in [0,1]$ such that there is a polynomial time $\alpha$-approximation algorithm\footnote{For the purposes of this paper we only consider classical algorithms, but one can also consider quantum algorithms.} for $\mathcal{F}$?''. We will use $\alpha^*_\mathcal{F}$ to denote the answer to this question.

\paragraph{Quantum Max Cut.}

In this context, the most commonly studied problem is Quantum Max Cut, where the family of Hamiltonians is $\mathcal{F}=\textup{QMC}$. 
Hamiltonians $H\in \textup{QMC}$ are parametrized by a symmetric $n\times n$ interaction matrix $w$ with nonnegative entries, and defined as
\[H = \sum_{i< j\in [n]}w_{ij}H_{ij},\]
where $H_{ij}$ is a projection onto the singlet state $\frac{\ket{01}-\ket{10}}{\sqrt{2}}$ on qubits $i$ and $j$ tensored with identity on the rest. This model is equivalent to the well-known quantum Heisenberg model and is a simple QMA-hard example of a 2-local Hamiltonian. 

A line of work initiated by Gharibian and Parekh \cite{GP} has studied approximation algorithms for this problem \cite{parekh-thompson,anshu2020beyond,parekh2022optimal,lee2022,king2023improved,huber2024,wattssu2,takahashisu2}. The best known results in the literature are an algorithm showing $\alpha^*_{\textup{QMC}}\geq 0.599$ \cite{eunou,jorquera2024monogamy}, and a (conditional) hardness result showing $\alpha^*_{\textup{QMC}}\leq 0.956$ \cite{hwang2023unique}, leaving open a large range of possible values of $\alpha^*_{\textup{QMC}}$.

Our current understanding suggests that Quantum Max Cut presents two key conceptual barriers in closing this gap.
\begin{itemize}
    \item First, the Hamiltonian terms are antiferromagnetic, encouraging pairs of qubits to magnetize in opposite directions on the Bloch sphere. This is akin to the difficulty already found in classical problems like Max Cut.
    \item Second, there is a purely quantum question of finding the correct ground state (or approximate ground state) entanglement structure.
\end{itemize}
Existing algorithmic and hardness techniques seem unlikely to fully capture the interplay between these two obstacles. In order to isolate the second challenge, King \cite{king2023improved} proposed a new simpler Hamiltonian, known as the EPR Hamiltonian.

\paragraph{EPR Hamiltonian.}

The EPR Hamiltonian family $\mathcal{F}=\text{EPR}$ is again parametrized by a symmetric $n\times n$ interaction matrix $w$ with nonnegative entries, and is defined as
\[H = \sum_{i< j\in [n]}w_{ij}E_{ij},\]
where $E_{ij}$ is a projection onto the EPR state $\ket{\EPR}=\frac{\ket{00}+\ket{11}}{\sqrt{2}}$ on qubits $i$ and $j$ tensored with identity on the rest. Notice that the Hamiltonian terms $E_{ij}$ are ferromagnetic; for example, the product state achieving optimal energy for any EPR Hamiltonian is the trivial state $\rho = \ketbra{0^n}{0^n}$. In contrast, for QMC Hamiltonians, computing even a $0.956$-approximation to the optimal product state is computationally hard \cite{hwang2023unique}.

King \cite{king2023improved} gave an algorithm showing $\alpha^*_{\text{EPR}}\geq 1/\sqrt{2}\approx 0.707$, which was later slightly improved to $0.72$ \cite{jorquera2024monogamy}. However, no hardness result are currently known and whether $\alpha^*_{\text{EPR}}$ is strictly less than $1$ is currently open. 

Our main result is an improved approximation algorithm for EPR:
\begin{theorem}\label{thm:main}
        $\alpha^*_{\text{EPR}}\geq \frac{1+\sqrt{5}}{4}\approx 0.809$, i.e., there is a polynomial time $\frac{1+\sqrt{5}}{4}$-approximation algorithm for the EPR problem.
\end{theorem}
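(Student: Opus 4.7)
The plan is to attack the theorem via an SDP relaxation coupled with a matching-based rounding. First, I would formulate a semidefinite program upper-bounding $\lambda_{\max}(H)$: plausibly a low level of the NPA hierarchy augmented with monogamy-of-entanglement constraints (a single qubit cannot simultaneously have large EPR-fidelity with many others). The SDP produces fractional edge values $y_{ij}\in[0,1]$ intended to represent $\tr(\rho\,E_{ij})$, with $\sum_{i<j}w_{ij}\,y_{ij}\geq\lambda_{\max}(H)$. Second, I would round this solution into a matching $M$ of the interaction graph, aiming to preserve as much as possible of the ``excess'' quantity $\sum_{ij}w_{ij}(2y_{ij}-1)_+$; this could proceed by threshold rounding on the $y_{ij}$'s or by rounding an auxiliary fractional matching supported on high-SDP-value edges.

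Third, I would output the mixed state
\[
\rho_\lambda \;=\; \lambda\,|0^n\rangle\langle 0^n|\;+\;(1-\lambda)\,|\psi_M\rangle\langle\psi_M|,
\]
where $|\psi_M\rangle=\bigotimes_{(i,j)\in M}|\EPR\rangle_{ij}\otimes\bigotimes_{v\notin V(M)}|0\rangle_v$ places an EPR pair on each matched edge and the state $|0\rangle$ on each unmatched vertex; $\lambda\in[0,1]$ is a tuning parameter to be optimized.

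The analysis proceeds edge by edge. A direct computation of the two-qubit reduced states of $\rho_\lambda$ gives the per-edge expected energy $1-\lambda/2$ for matched edges $(i,j)\in M$; $(1+\lambda)/4$ for edges with exactly one matched endpoint or with endpoints in two distinct pairs of $M$; and $1/2$ for edges between two unmatched vertices. Comparing these against the SDP's per-edge contribution and optimizing $\lambda$ produces the overall approximation ratio, and the golden-ratio constant $(1+\sqrt{5})/4$ should arise as the optimum of the resulting min-max problem.

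The hard part will be calibrating the SDP strength against the matching-based rounding. The dangerous edges are those with $y_{ij}$ close to $1$ but absent from $M$, since they yield only $(1+\lambda)/4 \le 1/2$ in $\rho_\lambda$ while costing nearly the full $y_{ij}$ against the SDP. Establishing the tight $(1+\sqrt{5})/4$ bound will require a structural guarantee — likely via monogamy-type SDP constraints combined with a charging argument or LP duality — that such ``lost'' high-$y$ edges in aggregate cost no more than matched-edge gains afford. The bipartite Quantum Max Cut corollary then follows from the standard local-unitary reduction converting singlet projectors to EPR projectors on bipartite graphs.
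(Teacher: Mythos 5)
There is a genuine gap, and it sits exactly at the heart of the improvement. Your rounded state $\rho_\lambda = \lambda\ketbra{0^n}{0^n} + (1-\lambda)\ketbra{\psi_M}{\psi_M}$ is a \emph{classical probabilistic mixture} of the product state and the matching state, and your per-edge energies ($1-\lambda/2$ on matched edges, $(1+\lambda)/4$ on cross edges) confirm this. Against any relaxation that certifies value $1$ on matched edges and at least $1/2$ on the remaining edges (which is forced: every edge has energy at least $1/2$ under $\ketbra{0^n}{0^n}$, so no sound upper bound can assign less), this mixture is mathematically capped at ratio $\min(1-\lambda/2,\ (1+\lambda)/2) \le 3/4$, attained at $\lambda=1/2$. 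No amount of strengthening the SDP or cleverness in the charging argument can push this particular family of states past $3/4$; this is precisely the Lee--Parekh-style baseline. The missing idea is to interpolate between the two states \emph{coherently} rather than probabilistically: place the tilted EPR state $\ket{\EPR_\theta}=\sqrt{\theta}\ket{00}+\sqrt{1-\theta}\ket{11}$ on matched edges and the matching one-qubit marginal $\theta\ketbra{0}{0}+(1-\theta)\ketbra{1}{1}$ on unmatched vertices. Writing $\gamma=\sqrt{\theta(1-\theta)}$, matched edges then gain \emph{linearly} (energy $\tfrac12+\gamma$) while all other edges lose only \emph{quadratically} (energy $\tfrac12-\gamma^2$); equating the two ratios $\tfrac12+\gamma = \tfrac{1/2-\gamma^2}{1/2}$ gives $\gamma=\tfrac{\sqrt5-1}{4}$ and the ratio $\tfrac{1+\sqrt5}{4}$. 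The golden-ratio constant comes from this linear-versus-quadratic tradeoff, not from a min-max over mixing probabilities.

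Two further points. First, the upper bound does not need an NPA-style SDP: the star bound (a monogamy inequality) shows that $\{\max(0,2\tr(\rho E_{ij})-1)\}$ is feasible for the fractional matching LP, so $\lambda_{\max}(H)\le \sum_{ij}w_{ij}\tfrac{1+x_{ij}}{2}$ for the LP optimum $x$; your instinct that monogamy constraints are the right certificate is correct, but the LP suffices and makes the rounding target explicit. Second, your proposal never resolves the non-bipartite case. The fix is structural, not a charging argument: the fractional matching LP is half-integral, so its optimal vertex decomposes into a matching $M$ plus vertex-disjoint odd cycles carrying value $1/2$ per edge (hence target $\tilde E_{ij}=3/4$ there). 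One then needs separately constructed cycle states whose one-qubit marginals match $\theta\ketbra{0}{0}+(1-\theta)\ketbra{1}{1}$ (so they tensor consistently with the rest) and which beat $\tfrac34\cdot\tfrac{1+\sqrt5}{4}$ on cycle edges; this step is handled by an explicit (computer-assisted) construction and does not fall out of the matching-based rounding you describe.
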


\paragraph{Bipartite QMC.}

We call an instance of QMC or EPR \emph{bipartite} if its interaction matrix $w$ corresponds to the adjacency matrix of a bipartite graph. In other words, the $n$ qubits can be partitioned into two disjoint sets $[n] = A\cup B$ such that $w_{ij}=0$ for all pairs $i,j\in A$ and pairs $i,j\in B$, meaning interactions occur only between qubits in different sets. Bipartite instances of QMC and EPR are equivalent up to a Pauli $Y$ applied to every qubit of the subset $A$ (or $B$). As a result, our algorithm also applies to bipartite QMC instances, improving upon the previously best known $3/4$-approximation that one can infer from the work of Lee and Parekh \cite{eunou}.

\begin{corollary}\label{cor:bipartite}
    There is a polynomial time $\frac{1+\sqrt{5}}{4}$-approximation algorithm for bipartite instances of the Quantum Max Cut problem.
\end{corollary}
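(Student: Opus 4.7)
The plan is to reduce bipartite Quantum Max Cut directly to the EPR problem via a local change of basis, and then invoke Theorem~\ref{thm:main}. The essential observation is the identity $(I\otimes Y)\ket{\EPR} = i\ket{\text{singlet}}$, which I would verify by a one-line computation using $Y\ket{0}=i\ket{1}$, $Y\ket{1}=-i\ket{0}$. Taking outer products and using $Y^\dagger = Y$, this yields the two-qubit projector identity $H_{ij} = (I_i\otimes Y_j)\,E_{ij}\,(I_i\otimes Y_j)$, i.e.\ the singlet projector on qubits $i,j$ equals the EPR projector conjugated by a Pauli $Y$ on either endpoint.

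Next, given a bipartite QMC instance with bipartition $[n]=A\cup B$, I would define the single-qubit product unitary $U = \bigotimes_{j\in B} Y_j$ (identity on $A$). Since every edge of $w$ has exactly one endpoint in $B$, conjugation by $U$ acts on each term $w_{ij}H_{ij}$ by applying $Y$ to precisely the $B$-endpoint, turning it into $w_{ij}E_{ij}$ by the identity above. Summing gives
\[
U\,H_{\textup{QMC}}\,U^\dagger \;=\; H_{\textup{EPR}},
\]
so the two Hamiltonians share eigenvalues, and in particular $\lambda_{\max}(H_{\textup{QMC}}) = \lambda_{\max}(H_{\textup{EPR}})$.

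The algorithm is then immediate: run the polynomial-time $\frac{1+\sqrt{5}}{4}$-approximation algorithm of Theorem~\ref{thm:main} on the EPR instance $H_{\textup{EPR}}$ to obtain a mixed state $\sigma$ with $\tr(\sigma\cdot H_{\textup{EPR}}) \geq \tfrac{1+\sqrt{5}}{4}\,\lambda_{\max}(H_{\textup{EPR}})$, and output $\rho := U^\dagger \sigma U$. By cyclicity of the trace,
\[
\tr(\rho\cdot H_{\textup{QMC}}) \;=\; \tr(\sigma\cdot U H_{\textup{QMC}} U^\dagger) \;=\; \tr(\sigma\cdot H_{\textup{EPR}}) \;\geq\; \tfrac{1+\sqrt{5}}{4}\,\lambda_{\max}(H_{\textup{QMC}}),
\]
which is the claimed bound. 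The unitary $U$ is a single-qubit product and can be tracked by a short classical description, so applying it to the (classical) description of $\sigma$ produced by Theorem~\ref{thm:main} is polynomial-time.

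There is essentially no obstacle here: the corollary is a direct local-unitary reduction to the main theorem. The only point worth care is bookkeeping the sign/phase from the Pauli $Y$ conjugation, which cancels at the level of projectors; once that is checked, the approximation ratio transfers verbatim.
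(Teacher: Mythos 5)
Your proof is correct and is exactly the paper's argument: the paper justifies the corollary by noting that bipartite QMC and EPR instances are equivalent up to a Pauli $Y$ on every qubit of one side of the bipartition, which is precisely your local-unitary conjugation $U=\bigotimes_{j\in B}Y_j$. You have merely spelled out the phase bookkeeping that the paper leaves implicit.
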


\section{Algorithm}\label{sec:2}

\subsection{An upper bound on $\lambda_{\max}(H)$}
We begin by describing an efficiently computable upper bound on $\lambda_{\max}(H)$. Let us denote by $\textup{LP}(w)$ the optimal value of the following linear program, known as the \emph{fractional matching LP}:
    \begin{align*}
        \max &\sum_{i<j}w_{ij}\cdot x_{ij}\\
         \text{s.t.}\ \sum_{j\neq i}x_{ij}&\leq 1\quad\text{for all $i\in [n]$}\\
         x_{ij}&\geq 0\quad\text{for all }i,j\in [n]
    \end{align*}
Let $x = (x_{ij})_{i < j}$ be an optimal solution to the above LP. We say that the \emph{LP energy} on edge $\{i,j\}$ is $\tilde{E}_{ij} = \frac{1 + x_{ij}}{2}$.

The \emph{star bound}, proved by Anshu-Gosset-Morenz \cite{anshu2020beyond}, gives a way to bound the ground state energy $\lambda_{\max}(H)$ in terms of the above linear program. We will need a version of this bound that applies to the EPR Hamiltonian (appearing in \cite[Lemma~7]{king2023improved} for example).

\begin{lemma}\label{lem:star-bound}
    For any $n$-qubit state $\rho$ and $i\in [n]$,
    \[\sum_{j\neq i}\max(0,2\tr(\rho\cdot E_{ij})-1)\leq 1.\]
\end{lemma}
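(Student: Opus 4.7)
The plan is a direct Pauli-basis computation that reduces the star bound to an eigenvalue bound on an explicit Hermitian operator. Since terms with $2\tr(\rho E_{ij}) \le 1$ contribute $0$ to the left-hand side, we may restrict the sum to the subset $T := \{j \ne i : 2\tr(\rho E_{ij}) > 1\}$ of size $k$, so that it suffices to prove $\sum_{j \in T}(2\tr(\rho E_{ij}) - 1) \le 1$. Expanding the EPR projector in the Pauli basis, $E_{ij} = \tfrac14(I + X_i X_j - Y_i Y_j + Z_i Z_j)$, this becomes $\tr\!\bigl(\rho \cdot \tfrac12(M - kI)\bigr) \le 1$, where $M := X_i A - Y_i B + Z_i C$ with $A := \sum_{j \in T} X_j$, $B := \sum_{j \in T} Y_j$, $C := \sum_{j \in T} Z_j$. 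So it suffices to show $\lambda_{\max}(M) \le k+2$.

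The key step is to compute $M^2$. Since $A, B, C$ act on qubits disjoint from qubit $i$ and satisfy $[A,B] = 2iC$, $[B,C] = 2iA$, $[C,A] = 2iB$ (single-qubit Pauli commutators on different qubits commute), a direct expansion of $M \cdot M$ collapses to the clean identity
\[M^2 = A^2 + B^2 + C^2 + 2M, \qquad \text{equivalently,} \qquad (M - I)^2 = A^2 + B^2 + C^2 + I.\]
Then, using the operator identity $X_\ell X_m + Y_\ell Y_m + Z_\ell Z_m = 2\,\mathrm{SWAP}_{\ell m} - I$ for distinct $\ell, m$, one rewrites $A^2 + B^2 + C^2 = (4k - k^2) I + 2\sum_{\ell \ne m,\, \ell,m \in T}\mathrm{SWAP}_{\ell m}$, and since $\|\mathrm{SWAP}\| = 1$, the triangle inequality gives $\lambda_{\max}(A^2 + B^2 + C^2) \le k(k+2) = (k+1)^2 - 1$. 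Hence $\lambda_{\max}((M-I)^2) \le (k+1)^2$, and since $M - I$ is Hermitian, $\lambda_{\max}(M) \le k+2$, as required.

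The main obstacle is verifying the squared identity for $M$: the various Pauli cross terms must combine into exactly $2M$, which hinges on the precise sign pattern $(+X_iX_j,-Y_iY_j,+Z_iZ_j)$ of the EPR projector, inherited from the fact that $\ket{\EPR}$ is a $+1$ eigenstate of $X_iX_j$ and $Z_iZ_j$ but a $-1$ eigenstate of $Y_iY_j$. A slicker alternative, if one prefers to avoid the direct computation, is to apply $Y$ to every qubit other than $i$: since $(I \otimes Y)\ket{\EPR}$ is proportional to the singlet state, this conjugation carries each $E_{ij}$ to the singlet projector, reducing the claim to the singlet-projector star bound of Anshu--Gosset--Morenz.
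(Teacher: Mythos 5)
Your proof is correct, and I verified each step. The paper does not actually prove \cref{lem:star-bound} --- it imports it from King \cite{king2023improved} (Lemma~7 there), which in turn adapts the Anshu--Gosset--Morenz star bound --- so a self-contained argument is a genuine addition. Concretely: the reduction to $T$ is valid because the left-hand side equals $\sum_{j\in T}(2\tr(\rho E_{ij})-1)$ exactly; the Pauli expansion $E_{ij}=\tfrac14(I+X_iX_j-Y_iY_j+Z_iZ_j)$ is the right one for $\ket{\EPR}$; the cross terms in $M^2$ do collapse to $2M$ (e.g.\ $-X_iY_iAB-Y_iX_iBA=-iZ_i[A,B]=2Z_iC$, and cyclically), giving $(M-I)^2=A^2+B^2+C^2+I$; and the bound $\lambda_{\max}(A^2+B^2+C^2)\le k(k+2)$ follows from the SWAP identity with the sum taken over ordered pairs (it is also the familiar total-spin bound $4\,s(s+1)$ with $s=k/2$). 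From $(M-I)^2\preceq (k+1)^2 I$ and Hermiticity you correctly conclude $\lambda_{\max}(M)\le k+2$, hence $\tr\bigl(\rho\cdot\tfrac12(M-kI)\bigr)\le 1$. Your argument is essentially the standard proof of the star bound transported to the EPR sign convention $(+,-,+)$, which is exactly the point where care is needed and where you got the signs right. The alternative you sketch --- conjugating by $Y$ on every qubit other than $i$ to map each $E_{ij}$ to the singlet projector and then quoting the antiferromagnetic star bound directly --- is also valid and is in spirit how the EPR version is obtained in the cited works; the direct computation buys self-containedness at the cost of a few lines of Pauli algebra, while the conjugation trick buys brevity at the cost of an external reference.
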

As a consequence, $\{\max(0,2\tr(\rho\cdot E_{ij})-1)\}_{i<j}$ always gives a feasible solution to the LP, implying that $\sum_{i<j} w_{ij}\max(0,2\tr(\rho\cdot E_{ij})-1)\leq \textup{LP}(w)$. This implies the upper bound
\begin{equation}\label{eqn:energy-upper-bound}
    \lambda_{\max}(H) =\max_{\rho\succeq 0,\tr(\rho)=1}\tr(\rho\cdot H)\leq \frac{\sum_{i<j}w_{ij} + \textup{LP}(w)}{2}=\sum_{i<j}w_{ij}\tilde{E}_{ij}.
\end{equation}
All the algorithms we present in this paper involve computing an optimal solution $x$ to $\textup{LP}(w)$, and applying a \emph{rounding algorithm} to $x$ to compute the description of a state $\rho$ achieving energy
\begin{equation}\label{eqn:rounding-guarantee}
    \tr(\rho\cdot E_{ij})\geq \alpha\cdot \tilde{E}_{ij}
\end{equation}
on every edge $\{i,j\}\in \binom{[n]}{2}$ for some $\alpha>0$. By \cref{eqn:energy-upper-bound}, we have
\[\tr(\rho\cdot H) = \sum_{i<j}w_{ij}\cdot \tr(\rho\cdot E_{ij})\geq \alpha\cdot \sum_{i<j}w_{ij}\tilde{E}_{ij}\geq \alpha\cdot \lambda_{\max}(H),\] 
so this results in an $\alpha$-approximation algorithm.

Let us describe the rough plan for our presentation of this rounding algorithm. First, in \cref{sec:2-1}, we describe how to use a rounding strategy used in the work of Lee and Parekh \cite{eunou} to achieve $\alpha=3/4$ in the special case that the interaction matrix $w$ is bipartite. Next, in \cref{sec:2-2} we give an improved rounding algorithm achieving $\alpha=\frac{1+\sqrt{5}}{4}$ for bipartite instances. Finally, in \cref{sec:2-3} we show how to achieve the same approximation ratio $\alpha=\frac{1+\sqrt{5}}{4}$ for general non-bipartite instances, proving \cref{thm:main}.

\subsection{A $3/4$-approximation in the bipartite case}\label{sec:2-1}

In the bipartite case, it is well known that the fractional matching LP is integral, i.e., its vertices are indicator vectors of matchings.

\begin{lemma}[\cite{edmonds1965maximum}]\label{lemma:bip-matching}
    For any weighted bipartite graph with adjacency matrix $w$, $\textup{LP}(w) = w(M)$, where $M$ is the maximum weight matching of the graph. That is, the optimal solution to $\textup{LP}(w)$ is achieved by an integer vector $x_{ij} = \mathbf{1}_{\{\{i,j\}\in M\}}$.
\end{lemma}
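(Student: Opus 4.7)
The plan is to exhibit a direct structural proof that the fractional matching polytope on a bipartite graph is integral, from which the lemma follows immediately. Since the objective is linear, the LP attains its optimum at a vertex of the feasible polytope $P = \{x \geq 0 : Ax \leq \mathbf{1}\}$, where $A \in \{0,1\}^{n \times \binom{n}{2}}$ is the vertex-edge incidence matrix of the underlying graph (restricted to edges with nonzero weight). It therefore suffices to show every vertex of $P$ is integer-valued, because an integer point in $P$ must have $x_{ij}\in\{0,1\}$ (by non-negativity combined with $\sum_{j\neq i}x_{ij}\leq 1$), and the set of edges with $x_{ij}=1$ then forms a matching by the degree constraints.

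The key ingredient is the total unimodularity of $A$ when the underlying graph is bipartite. I would verify this by the standard column-coloring criterion: with bipartition $[n]=A\cup B$, each column of the incidence matrix (corresponding to an edge $\{i,j\}$ with $i\in A$, $j\in B$) has exactly one $1$ in a row indexed by $A$ and exactly one $1$ in a row indexed by $B$, so assigning $+1$ to $A$-rows and $-1$ to $B$-rows partitions the rows such that every column sums to $0$. By the Ghouila-Houri characterization of total unimodularity, this implies $A$ is TU. A classical theorem of Hoffman and Kruskal then yields that the polyhedron $\{x \geq 0 : Ax \leq b\}$ has only integer vertices whenever $b$ is integer, which applies here with $b = \mathbf{1}$.

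Putting the pieces together, any optimal vertex solution $x^{*}$ of $\textup{LP}(w)$ is the $0/1$ indicator vector of some matching $M^{*}$, giving $\textup{LP}(w) = w(M^{*}) \leq w(M)$ where $M$ is the maximum weight matching. The reverse inequality is immediate, since the indicator of $M$ is feasible for the LP and achieves objective value $w(M)$.

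The main obstacle is really only bookkeeping; the mathematical content (total unimodularity of bipartite incidence matrices plus Hoffman-Kruskal) is entirely classical and the paper may well prefer to cite Edmonds or a textbook treatment rather than reproduce the TU argument. An alternative, and perhaps cleaner, route would be LP duality together with K\"onig's theorem: the dual of $\textup{LP}(w)$ is a fractional vertex cover LP whose integrality on bipartite graphs is K\"onig's theorem, and weak/strong duality then sandwich $\textup{LP}(w)$ between the max matching and min vertex cover values, which coincide.
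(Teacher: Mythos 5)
Your proof is correct. The paper does not prove this lemma at all --- it simply cites Edmonds and treats the integrality of the bipartite fractional matching polytope as classical, exactly as you anticipated --- so you are supplying a standard argument the paper omits. The only minor imprecision is in your invocation of Ghouila--Houri: that criterion requires a valid $\pm 1$ row-coloring for \emph{every subset} of rows, not just the full row set; fortunately the same bipartition-based coloring works when restricted to any subset (each column then has at most one $+1$ row and at most one $-1$ row, so the signed sum lies in $\{-1,0,1\}$), so the argument goes through. Your alternative route via LP duality and K\H{o}nig--Egerv\'ary is equally valid.
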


We can find such an optimal solution using an algorithm for maximum weighted matching. Let $U\subseteq [n]$ be the set of vertices that do not appear in the matching $M$. Consider the two states
\[\rho_{\text{prod}} = \ketbra{0^n}{0^n},\quad \rho_{\text{match}} = \left(\bigotimes_{\{i,j\}\in M} (\ketbra{\EPR}{\EPR})_{i,j}\right)\otimes \left(\bigotimes_{i\in U}\frac{I_i}{2}\right).\]
The rounding algorithm will output a random choice between $\rho_{\text{prod}}$ and $\rho_{\text{match}}$. In particular, we output the state $\rho=(1-p)\cdot \rho_{\text{prod}} + p\cdot \rho_{\text{match}}$ for some choice of $p\in [0,1]$.
\begin{claim}\label{claim:bip-bad-energy}
    The energy attained on edge $\{i,j\}$ is
\[ \tr(\rho \cdot E_{ij})=\begin{cases}
    \frac{1}{2}+\frac{p}{2} &\text{ if }\{i,j\}\in M,\\\frac{1}{2}-\frac{p}{4}\quad &\text{ if }\{i,j\}\notin M.
\end{cases}\]
\end{claim}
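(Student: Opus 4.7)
The plan is to compute $\tr(\rho\cdot E_{ij})$ by linearity, splitting it as
\[\tr(\rho\cdot E_{ij})=(1-p)\tr(\rho_{\text{prod}}\cdot E_{ij})+p\tr(\rho_{\text{match}}\cdot E_{ij}),\]
and then evaluating the two terms separately.

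For the product-state term, note that $E_{ij}$ acts as identity outside the qubits $i,j$, so $\tr(\rho_{\text{prod}}\cdot E_{ij})$ equals the overlap $|\langle 00|\EPR\rangle|^2$ of the single amplitude $\tfrac{1}{\sqrt{2}}$, giving $\tfrac{1}{2}$. This contribution is the same for every edge, matched or unmatched, and is already consistent with both cases of the claim.

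For the matching-state term, I would split into the two cases. If $\{i,j\}\in M$, then by construction the reduced state of $\rho_{\text{match}}$ on qubits $i,j$ is exactly $\ketbra{\EPR}{\EPR}$, so the projection has trace $1$; combining with the product-state contribution gives $(1-p)\tfrac{1}{2}+p=\tfrac{1}{2}+\tfrac{p}{2}$. If $\{i,j\}\notin M$, the key observation is that the reduced state on $\{i,j\}$ is $\tfrac{I}{4}$: at least one of $i,j$ is either in $U$ (explicitly maximally mixed) or belongs to a different matched edge, and since the marginal of $\ketbra{\EPR}{\EPR}$ on a single qubit is $\tfrac{I}{2}$, that qubit decouples from $\{i,j\}\setminus\{\text{it}\}$ and yields $\tfrac{I}{2}\otimes \tfrac{I}{2}$ once one repeats the argument for the other qubit. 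Then $\tr\bigl(\tfrac{I}{4}\cdot \ketbra{\EPR}{\EPR}\bigr)=\tfrac{1}{4}$, so the total is $(1-p)\tfrac{1}{2}+p\cdot \tfrac{1}{4}=\tfrac{1}{2}-\tfrac{p}{4}$.

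There isn't really a hard step; the only thing to be careful about is the unmatched case, where one must enumerate the possibilities (both endpoints in $U$, one endpoint in $U$ and the other matched to a third qubit, or both endpoints matched to distinct third qubits) and observe that in each sub-case the marginal on $\{i,j\}$ factorizes into two maximally mixed single-qubit states, justifying the value $\tfrac{1}{4}$ uniformly. Bipartiteness is not actually needed for the claim itself; it only enters the surrounding analysis through \cref{lemma:bip-matching}.
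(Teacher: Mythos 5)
Your proof is correct and follows the same route the paper takes for the analogous \cref{claim:bip-energy} (the paper leaves \cref{claim:bip-bad-energy} unproved, but its method is exactly yours: compute the $2$-qubit marginal on $\{i,j\}$ in each case and take its overlap with $\ketbra{\EPR}{\EPR}$). Your handling of the unmatched case — noting that $i$ and $j$ lie in different tensor factors of $\rho_{\text{match}}$ so the marginal factorizes into $\tfrac{I}{2}\otimes\tfrac{I}{2}$ — is the right justification, and your remark that bipartiteness is not needed for the claim itself is also accurate.
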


On the other hand, the LP energy on edge $\{i,j\}$ is
\[\tilde{E}_{ij}=\frac{1+x_{ij}}{2}=\begin{cases}
    1 &\text{ if }\{i,j\}\in M,\\\frac{1}{2}\quad &\text{ if }\{i,j\}\notin M.
\end{cases}\]
If we set $p=1/2$, then one can check that for each edge $\{i,j\}$,
\[\tr(\rho \cdot E_{ij}) = \frac{3}{4}\tilde{E}_{ij},\]
proving \cref{eqn:rounding-guarantee} for $\alpha=\frac{3}{4}$.

\subsection{An improvement in the bipartite case}\label{sec:2-2}

Our improvement to this algorithm is simple to state -- we interpolate between $\rho_{\text{prod}}$ and $\rho_{\text{match}}$ using quantum superposition rather than in probability. Concretely, let $M$ be the maximum matching and $U$ be the unmatched vertices as in \cref{sec:2-1}. Define the tilted EPR state $\ket{\EPR_\theta}:=\sqrt{\theta}\ket{00}+\sqrt{1-\theta}\ket{11}$. We will output the state
\begin{equation}\label{eqn:bip-state}
    \rho = \left(\bigotimes_{\{i,j\}\in M} (\ketbra{\EPR_\theta}{\EPR_\theta})_{i,j}\right)\otimes \left(\bigotimes_{i\in U}\left(\theta\ketbra{0}{0} + (1-\theta)\ketbra{1}{1}\right)_{i}\right).
\end{equation}

\begin{claim}\label{claim:bip-energy}
The energy attained on edge $\{i,j\}$ is
\[ \tr(\rho \cdot E_{ij})=\begin{cases}
    \frac{1}{2}+\sqrt{\theta(1-\theta)}=\frac{1}{2}+\gamma\quad &\text{ if }\{i,j\}\in M,\\\frac{1}{2}-\theta(1-\theta)=\frac{1}{2}-\gamma^2\quad &\text{ if }\{i,j\}\notin M,
\end{cases}\]
where $\gamma = \sqrt{\theta(1-\theta)}$.
\end{claim}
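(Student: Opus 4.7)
The plan is to split the calculation into two cases depending on whether $\{i,j\}\in M$ and do the partial-trace computation directly.

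\textbf{Case 1: $\{i,j\}\in M$.} Here the reduced state of $\rho$ on qubits $i,j$ is exactly $\ketbra{\EPR_\theta}{\EPR_\theta}$, since this matched pair is a disjoint tensor factor from everything else. So $\tr(\rho\cdot E_{ij})$ reduces to $|\braket{\EPR|\EPR_\theta}|^2$. Expanding $\braket{\EPR|\EPR_\theta} = \tfrac{1}{\sqrt{2}}(\sqrt{\theta}+\sqrt{1-\theta})$ and squaring gives $\tfrac{1}{2}(1+2\sqrt{\theta(1-\theta)}) = \tfrac{1}{2}+\gamma$, which is the claimed value.

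\textbf{Case 2: $\{i,j\}\notin M$.} The key observation is that the single-qubit marginal $\rho_i$ equals $\theta\ketbra{0}{0}+(1-\theta)\ketbra{1}{1}$ for \emph{every} vertex $i\in[n]$. For $i\in U$ this is by definition of $\rho$; for $i$ matched, tracing out its partner from $\ketbra{\EPR_\theta}{\EPR_\theta}$ kills the off-diagonal terms $\sqrt{\theta(1-\theta)}\ketbra{00}{11}$ and $\sqrt{\theta(1-\theta)}\ketbra{11}{00}$, leaving the same diagonal state. Since $\{i,j\}\notin M$, qubits $i$ and $j$ lie in disjoint tensor factors of $\rho$, so $\rho_{ij}=\rho_i\otimes\rho_j$ is a diagonal (classical) state with $\bra{00}\rho_{ij}\ket{11}=\bra{11}\rho_{ij}\ket{00}=0$ and $\bra{00}\rho_{ij}\ket{00}=\theta^2$, $\bra{11}\rho_{ij}\ket{11}=(1-\theta)^2$. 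Then
\[\tr(\rho\cdot E_{ij}) \;=\; \tfrac{1}{2}\bigl(\bra{00}+\bra{11}\bigr)\rho_{ij}\bigl(\ket{00}+\ket{11}\bigr) \;=\; \tfrac{1}{2}\bigl(\theta^2+(1-\theta)^2\bigr) \;=\; \tfrac{1}{2}-\theta(1-\theta),\]
matching the claim since $\gamma^2 = \theta(1-\theta)$.

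There is no real obstacle here; the only subtle point is the uniformity of the single-qubit marginals across $U$ and the matched qubits, which is what makes the Case 2 answer independent of whether $i,j$ lie in $U$ or in distinct matching edges. Everything else is a direct computation using the tensor-product structure of $\rho$.
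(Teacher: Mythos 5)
Your proof is correct and follows essentially the same route as the paper: both arguments compute the two-qubit marginals case by case (the tilted EPR state on matched edges, and a product of the common diagonal one-qubit marginals otherwise) and then evaluate the overlap with $\ket{\EPR}$; the paper merely packages the overlap computations into its Claim~4 rather than doing them inline.
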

Comparing \cref{claim:bip-energy} to \cref{claim:bip-bad-energy}, one can readily see that this algorithm performs strictly better; for instance, set $\gamma=p/2$. Our new rounding scheme achieves strictly better energy on edges outside $M$ and identical energy on edges inside $M$.
%
 % Again, the optimal choice for $\gamma$ is one that makes the former twice the latter, which results in greater total energy because of the quadratic factor in the unmatched energy.
%
% \begin{enumerate}
%     \item Let $$.
%     \item Compute a  Set \[\]
%     \item Output the state $\rho=(1-p)\cdot \rho_{\text{prod}} + p\cdot \rho_{\text{match}}$, where $p\in [0,1]$ is a parameter to be optimized.
% \end{enumerate}

It remains to prove \cref{claim:bip-energy}, and then to analyze the approximation ratio of this rounding algorithm. We will repeatedly use the following simple properties of $\ket{\EPR_\theta}$.

\begin{claim}
\label{claim:tiled_energy}
    Let $\gamma = \sqrt{\theta(1-\theta)}$. The following are true.
    \begin{enumerate}
        \item $\abs{\braket{\EPR|\EPR_\theta}}^2 = \frac{1}{2} + \gamma$.
        \item $\tr_{[1]}\ketbra{\EPR_\theta}{\EPR_\theta}=\tr_{[2]}\ketbra{\EPR_\theta}{\EPR_\theta}=\theta\ketbra{0}{0} + (1-\theta)\ketbra{1}{1}$.
        \item $\bra{\EPR} \left(\theta\ketbra{0}{0}+(1-\theta)\ketbra{1}{1}\right)^{\otimes 2} \ket{\EPR} = \frac{1}{2} - \gamma^2$.
    \end{enumerate}
\end{claim}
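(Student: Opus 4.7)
The plan is to prove all three parts by direct computation in the computational basis, since $\ket{\EPR_\theta}=\sqrt{\theta}\ket{00}+\sqrt{1-\theta}\ket{11}$ has only two nonzero amplitudes and $\ket{\EPR}=\tfrac{1}{\sqrt 2}(\ket{00}+\ket{11})$ is similarly sparse.

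For part 1, I would directly compute the inner product
\[
    \braket{\EPR \mid \EPR_\theta} = \tfrac{1}{\sqrt{2}}\bigl(\sqrt{\theta}+\sqrt{1-\theta}\bigr),
\]
square it, and use $\theta+(1-\theta)=1$ together with $\gamma=\sqrt{\theta(1-\theta)}$ to simplify $\tfrac12(\theta+(1-\theta)+2\sqrt{\theta(1-\theta)})$ to $\tfrac12+\gamma$.

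For part 2, I would expand
\[
    \ketbra{\EPR_\theta}{\EPR_\theta} = \theta\ketbra{00}{00} + (1-\theta)\ketbra{11}{11} + \sqrt{\theta(1-\theta)}\bigl(\ketbra{00}{11}+\ketbra{11}{00}\bigr),
\]
and note that tracing out either qubit kills the two cross terms (since $\braket{0\mid 1}=0$ on the traced-out register), leaving exactly $\theta\ketbra{0}{0}+(1-\theta)\ketbra{1}{1}$ on the surviving register. By the symmetry of the expression in the two qubits, this holds for both $\tr_{[1]}$ and $\tr_{[2]}$.

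For part 3, I would expand the two-qubit operator
\[
    \bigl(\theta\ketbra{0}{0}+(1-\theta)\ketbra{1}{1}\bigr)^{\otimes 2} = \theta^2\ketbra{00}{00} + \theta(1-\theta)\bigl(\ketbra{01}{01}+\ketbra{10}{10}\bigr)+(1-\theta)^2\ketbra{11}{11},
\]
and observe that $\ket{\EPR}$ has overlap only with the $\ket{00}$ and $\ket{11}$ components. So the sandwich evaluates to $\tfrac12\bigl(\theta^2+(1-\theta)^2\bigr)=\tfrac12-\theta(1-\theta)=\tfrac12-\gamma^2$, using $\theta^2+(1-\theta)^2=1-2\theta(1-\theta)$.

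There is no real obstacle here; the claim is a collection of one-line computations assembled to serve as lemmas for proving \cref{claim:bip-energy}. The only thing to be careful about is keeping the bookkeeping clean when expanding outer products so that the diagonal/off-diagonal structure (which is what makes the partial trace in part 2 and the sandwich in part 3 collapse cleanly) is transparent.
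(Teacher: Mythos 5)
Your computations are all correct, and they are exactly the routine basis expansions the paper has in mind — the claim is stated without proof precisely because these are one-line calculations. Nothing is missing: part 1's cross term gives $2\sqrt{\theta(1-\theta)}/2=\gamma$, part 2's off-diagonal terms vanish under the partial trace, and part 3's diagonal sandwich yields $\tfrac12(\theta^2+(1-\theta)^2)=\tfrac12-\gamma^2$ as you say.
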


Now, one can compute all the 2-qubit marginals: 
\[\tr_{[n]\setminus\{i,j\}}(\rho)=\begin{cases}
    \ketbra{\EPR_\theta}{\EPR_\theta}\quad &\text{ if }\{i,j\}\in M,\\\left(\theta\ketbra{0}{0}+(1-\theta)\ketbra{1}{1}\right)^{\otimes 2}\quad &\text{ if }\{i,j\}\notin M,
\end{cases}\]
where for the second case we used that the $1$-qubit marginals are all equal to
$\tr_{[n]\setminus\{1\}}[\ketbra{\EPR_\theta}{\EPR_\theta}]=\theta\ketbra{0}{0}+(1-\theta)\ketbra{1}{1}$.  By \cref{claim:tiled_energy}, the energy achieved by each term $\{i,j\}$ is determined by
\[\bra{\EPR}\tr_{[n]\setminus\{i,j\}}(\rho)\ket{\EPR}=\begin{cases}
    \frac{1}{2}+\sqrt{\theta(1-\theta)}=\frac{1}{2}+\gamma\quad &\text{ if }\{i,j\}\in M,\\\frac{1}{2}-\theta(1-\theta)=\frac{1}{2}-\gamma^2\quad &\text{ if }\{i,j\}\notin M,
\end{cases}\]
and \cref{claim:bip-energy} follows immediately. Now let us analyze the approximation ratio. Similarly to \cref{sec:2-1}, we have the LP energy
\begin{equation}\label{eqn:bip-good-lp-energy}
    \tilde{E}_{ij}=\begin{cases}
    1 &\text{ if }\{i,j\}\in M,\\\frac{1}{2}\quad &\text{ if }\{i,j\}\notin M.
\end{cases}
\end{equation}
Setting $\gamma = \frac{\sqrt{5}-1}{4}$, one can check using \cref{eqn:bip-good-lp-energy} and \cref{claim:bip-energy} that
\[\tr(\rho\cdot E_{ij})= \frac{1+\sqrt{5}}{4}\cdot\tilde{E}_{ij}\]
for each edge $\{i,j\}\in \binom{[n]}{2}$. Thus, we have shown \cref{eqn:rounding-guarantee} with $\alpha=\frac{1+\sqrt{5}}{4}$ as desired.

\subsection{General Case}\label{sec:2-3}

In the general non-bipartite case, we cannot apply \cref{lemma:bip-matching}; the fractional matching LP is not necessarily integral. Previous works proceeded by bounding the integrality gap of this LP, which is quite lossy. Our main idea is to take advantage of the fact that the fractional matching LP is \emph{half-integral}.
\begin{lemma}[{\cite[Theorem 7.5.1]{lovasz2009matching}}]\label{lemma:general-lp-vertices}
    For any weighted $n$-vertex graph with adjacency matrix $w$, there is a vertex-disjoint collection of edges $M\subset \binom{[n]}{2}$ and cycles $\mathcal{C}\subset 2^{\binom{[n]}{2}}$ such that $\textup{LP}(w) = w(M) + \frac{1}{2}\sum_{C\in \mathcal{C}}w(C)$. That is, the optimal solution to $\textup{LP}(w)$ is achieved by the half-integer vector $x_{ij}=\mathbf{1}_{\{\{i,j\}\in M\}} + \frac{1}{2}\sum_{C\in\mathcal{C}}\mathbf{1}_{\{\{i,j\}\in C\}}$.
\end{lemma}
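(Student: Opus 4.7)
The plan is to prove this classical half-integrality theorem by analyzing the extreme points of the fractional matching polytope. The LP is bounded (each variable is at most $1$) and feasible, so its optimum is attained at a basic feasible solution $x^*$; I will argue that $x^*$ is half-integral and that its support decomposes into the claimed disjoint union of edges and cycles.

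First, I would examine the support graph $S = \{\{i,j\} : x^*_{ij} > 0\}$ on $[n]$. The degree constraint $\sum_{j \neq i} x^*_{ij} \leq 1$ forces the maximum degree of $S$ to be at most $2$: if some vertex $i$ had three neighbors $j_1,j_2,j_3$ in $S$, all three values $x^*_{i j_k}$ would be strictly less than $1$, and one could shift small amounts of mass among them (say $+\epsilon,-\epsilon,0$ on the three coordinates) to produce two distinct feasible points whose average is $x^*$, contradicting extremality. Hence $S$ is a disjoint union of paths and cycles.

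Next, I would rule out paths and even cycles inside $S$ via an alternating perturbation. Given such a path or even cycle $P$ in $S$, two-color its edges red and blue alternately and let $\chi_P$ be the signed indicator that is $+1$ on red edges and $-1$ on blue ones. For sufficiently small $\epsilon > 0$, both $x^* + \epsilon \chi_P$ and $x^* - \epsilon \chi_P$ are feasible: at interior vertices of $P$ the degree sum is preserved exactly, and at path endpoints the slack $1 - \sum_j x^*_{ij}$ absorbs the perturbation (one may assume each path endpoint's unique support value is $<1$, else it would form an isolated-edge component rather than a path endpoint). Since $x^* = \tfrac12(x^* + \epsilon\chi_P) + \tfrac12(x^* - \epsilon\chi_P)$, extremality is violated. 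The only surviving components of $S$ are therefore isolated edges and odd cycles.

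Finally, on each isolated edge $\{i,j\}$ of $S$ we may take $x^*_{ij} = 1$, since raising a single coordinate toward $1$ on an isolated edge only increases the nonnegative-weighted objective. On each odd cycle $C = (v_1,\dots,v_{2k+1})$ in $S$, the degree constraint at every $v_i$ must be tight (else a single coordinate could be raised), giving the cyclic system $x^*_{v_{i-1}v_i} + x^*_{v_i v_{i+1}} = 1$; the odd length forces the unique solution $x^*_e = 1/2$ on every edge $e$ of $C$. Taking $M$ to be the value-$1$ edges and $\mathcal{C}$ the collection of odd cycles yields the stated vertex-disjoint decomposition, with $\textup{LP}(w) = w(M) + \tfrac{1}{2}\sum_{C\in\mathcal{C}}w(C)$. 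The main point requiring care is the path-endpoint bookkeeping in the perturbation step, and the verification that the cyclic linear system on an odd cycle has the uniform $1/2$ solution as its unique fixed point — both are routine but are the only places where the argument could go wrong.
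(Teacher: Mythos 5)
The paper does not actually prove this lemma; it imports it verbatim from Lov\'asz--Plummer, so the only meaningful comparison is between your argument and the standard textbook proof of half-integrality, whose outline you follow: take an optimal extreme point of the fractional matching polytope and show its support decomposes into value-$1$ matching edges and value-$\frac{1}{2}$ odd cycles. Your steps 2 and 3 (alternating perturbations killing paths and even cycles, and the tight cyclic system forcing $\frac{1}{2}$ on odd cycles) are sound in outline.

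The gap is in step 1. The shift $(+\epsilon,-\epsilon,0)$ on $\{i,j_1\},\{i,j_2\},\{i,j_3\}$ preserves the degree constraint at $i$, but it increases the degree sum at $j_1$ by $\epsilon$; if the constraint at $j_1$ is tight (and the one at $j_2$ is tight for the opposite sign), neither perturbed point is feasible and extremality is not contradicted. This failure mode is realized by concrete configurations such as two odd cycles joined by a path, or an odd cycle with a pendant path: there every neighbor of the degree-$3$ vertex can sit on a tight constraint, and the local shift fails in both directions. These configurations are in fact not extreme, but establishing that requires propagating the perturbation along an alternating walk whose endpoints are absorbed either at slack vertices or by winding around an odd cycle (where the odd length produces a net $\pm 2\epsilon$ at the attachment vertex that must be cancelled by the rest of the walk) --- cases your analysis never reaches because you have already, on the strength of the flawed local shift, concluded that the support has maximum degree $2$. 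The standard repairs are either this full alternating-structure case analysis or the linear-algebraic route: a basic feasible solution requires, on each component $S$ of the support, a set of tight degree constraints of size $|E(S)|$ whose incidence vectors are linearly independent, which forces $|E(S)|\leq |V(S)|$ and then lets you eliminate trees with at least two edges, even cycles, and unicyclic components with pendant edges one by one. A minor secondary point: in your odd-cycle step, ``a single coordinate could be raised'' needs slack at \emph{both} endpoints of that edge, so the cleaner justification that all $k$ degree constraints around the cycle are tight is again the count of linearly independent tight constraints against the $k$ support variables.
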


In particular, (e.g. using the algorithm of Anstee \cite{anstee1987polynomial}), one can efficiently find such a solution $x$, i.e. a vertex-disjoint collection of edges $M\subset \binom{[n]}{2}$ and odd length cycles $\mathcal{C}$ such that the LP energies can be written as
\begin{equation}\label{eqn:general-lp-energies}
    \tilde{E}_{ij}=\frac{1+x_{ij}}{2}=\begin{cases}
    1 \quad&\text{ if }\{i,j\}\in M,\\
    \frac{3}{4}\quad &\text{ if }\{i,j\}\in C\text{ for some }C\in \mathcal{C},\\\frac{1}{2}\quad &\text{ otherwise }.
\end{cases}
\end{equation}

Let $U$ be the set of vertices absent from $M$ and $\mathcal{C}$. The algorithm in the bipartite case (\cref{eqn:bip-state}) already suggests what quantum state we will output on the qubits involved in $M$ and $U$. On the qubits in $C$ for some odd cycle $C\in\mathcal{C}$ of length $k$, we will output a high energy state for the EPR Hamiltonian on the length $k$ cycle, i.e. $H_{\text{cycle}}=\sum_{i\in [k]}E_{i, i+1(\text{mod } k)}$, subject to the constraint that all $1$-qubit marginals are equal to $\theta\ketbra{0}{0}+(1-\theta)\ketbra{1}{1}$. The following lemma shows that there is a way to achieve sufficiently large energy on the cycle edges.

\begin{lemma}\label{lem:cycle-state}
    For any integer $k$, there is an efficient algorithm to compute the description of a $k$-qubit state $\rho_k$ satisfying the following, where $\gamma = \frac{\sqrt{5}-1}{4}$ and $\theta\geq 1/2$ is the solution to $\sqrt{\theta(1-\theta)}=\gamma$.
    \begin{enumerate}
        \item\label{item:1} For all $i\in [k]$, $\tr(\rho_k\cdot E_{i,i+1(\textup{mod k})})>\frac{3}{4}\cdot\frac{1+\sqrt{5}}{4}$.
        \item\label{item:2} For all $i\in [k]$, the $1$-qubit marginal $\tr_{[k]-i}[\rho_k]$ equals $\theta\ketbra{0}{0}+(1-\theta)\ketbra{1}{1}$.
        \item\label{item:3} For all $i,j\in [k]$, $\tr(\rho_k\cdot E_{ij})>\frac{1}{2}\cdot \frac{1+\sqrt{5}}{4}$.
    \end{enumerate}
\end{lemma}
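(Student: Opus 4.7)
My plan is to build $\rho_k$ as a small convex combination of two cyclically symmetric states: a ``base'' mixture over near-perfect matchings of the cycle, which almost achieves the required thresholds, and a cyclically symmetric ``boost'' state whose edge energies lie strictly above them. First, for each $t\in[k]$ I let $M_t$ be the near-matching of the $k$-cycle that isolates vertex $v_t$ and pairs consecutive vertices around the rest, and define
\[\rho_{\text{mix}}:=\tfrac{1}{k}\sum_{t=0}^{k-1}\sigma_{v_t}\otimes\bigotimes_{\{i,j\}\in M_t}\ketbra{\EPR_\theta}{\EPR_\theta}_{ij},\]
where $\sigma=\theta\ketbra{0}{0}+(1-\theta)\ketbra{1}{1}$. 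Cyclic symmetry forces every 1-qubit marginal to equal $\sigma$, and a direct count using \cref{claim:tiled_energy} shows that every cycle edge of $\rho_{\text{mix}}$ achieves energy $\frac{3(1+\sqrt{5})}{16}-\frac{\gamma+\gamma^2}{2k}$ while every non-cycle edge achieves energy exactly $\frac{1+\sqrt{5}}{8}$. Thus $\rho_{\text{mix}}$ barely fails the strict inequalities in items \ref{item:1} and \ref{item:3}.

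Next, I will exhibit a cyclically symmetric pure state $\ket{\psi_k}$ with 1-qubit marginal $\sigma$ whose energy strictly exceeds both thresholds on every edge. For $k=3$ the explicit state
\[\ket{\psi_3}=\sqrt{(3\theta-1)/2}\,\ket{000}+\sqrt{(1-\theta)/2}\,(\ket{011}+\ket{101}+\ket{110})\]
will suffice: a direct check shows its 1-qubit marginal equals $\sigma$ and its 2-qubit marginals achieve edge energy $\tfrac12\bigl(\sqrt{(3\theta-1)/2}+\sqrt{(1-\theta)/2}\bigr)^2$, which one verifies numerically exceeds $\frac{3(1+\sqrt{5})}{16}$. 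For general odd $k$ I will use the cyclic-orbit ansatz
\[\ket{\psi_k}\propto a\,\ket{0^k}+\sum_{d=1}^{(k-1)/2}b_d\sum_{i=0}^{k-1}\ket{e_i+e_{i+d\bmod k}},\]
a superposition of $\ket{0^k}$ with the orbits of weight-2 basis states under cyclic shifts, parameterized by chord-distance $d$. Since every basis state used has even Hamming weight, the 1-qubit marginal is automatically diagonal, and the marginal-matching condition determines $a^2$ in terms of $\sum_d b_d^2$; the remaining freedom in the $b_d$'s is then spent to push every edge strictly above its threshold.

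Finally, I set $\rho_k:=(1-\eta)\rho_{\text{mix}}+\eta\ketbra{\psi_k}{\psi_k}$ for any sufficiently small $\eta>0$. Linearity preserves the 1-qubit marginal $\sigma$, giving item \ref{item:2}, and the strict improvement of $\ket{\psi_k}$ on every edge carries through the convex combination to yield items \ref{item:1} and \ref{item:3}. The hardest step will be the construction of the boost state $\ket{\psi_k}$: the weight-2 ansatz is only feasible when $\theta\geq(k-2)/k$, so for large $k$ one must extend it, for instance by including higher-weight cyclic orbits or by replacing it with a translation-invariant matrix product state of small bond dimension. Simultaneously beating the cycle and all non-cycle edge bounds also requires a careful balancing of the amplitudes $(b_d)$ across the chord-distance classes, which is the most delicate finite combinatorial verification in the argument.
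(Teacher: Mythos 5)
Your reduction of the lemma to the existence of a cyclically symmetric ``boost'' state is sound in spirit, and your analysis of the base mixture $\rho_{\text{mix}}$ is correct: it falls short of the cycle-edge threshold $\frac{3}{4}\cdot\frac{1+\sqrt5}{4}$ by exactly $\frac{\gamma+\gamma^2}{2k}$ and sits exactly at the non-cycle threshold $\frac{1+\sqrt5}{8}$. But the construction of $\ket{\psi_k}$ for general $k$ --- which is the entire content of the lemma --- is not carried out. Your weight-$2$ cyclic-orbit ansatz cannot satisfy the marginal constraint once $k(1-\theta)>2$ (i.e.\ for all $k\gtrsim 19$, since $\theta\approx 0.893$), as you yourself note, and the proposed fixes (higher-weight orbits, translation-invariant MPS) are named but not constructed, let alone verified to beat both thresholds strictly on every chord class while keeping the marginal exactly $\sigma$. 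Since the lemma also demands an \emph{efficient} algorithm, you cannot fall back on solving a $k$-qubit SDP for each $k$. A second, smaller error: because $\rho_{\text{mix}}$ is strictly \emph{below} the cycle-edge threshold by $\Theta(1/k)$, taking ``any sufficiently small $\eta>0$'' does not work; you need $\eta\gtrsim \delta_k/(\epsilon+\delta_k)$ where $\delta_k$ is the deficit and $\epsilon$ is the boost's margin, so the quantifier runs in the wrong direction (this is easily repaired once a boost state with a concrete margin exists, but as written the final step is false).

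For contrast, the paper avoids ever needing a global boost state: for $k\le 5$ it solves a constant-size SDP directly, and for odd $k\ge 7$ it tensors a single numerically certified $5$-qubit gadget $\psi$ (with average path-edge energy $\ge 0.668$ and all marginals equal to $\sigma$) with $\frac{k-5}{2}$ copies of $\ketbra{\EPR_\theta}{\EPR_\theta}$ and then symmetrizes over cyclic shifts. The surplus of the fixed gadget over the threshold is a constant that dominates the $\Theta(1/k)$ deficit of the tilted-EPR part, and only one constant-size numerical search is ever required. Your approach could likely be completed along the same lines --- e.g.\ by taking $\ket{\psi_k}$ (or a localized analogue of it) from a fixed-size gadget --- but as it stands the key existence claim is assumed rather than proved.
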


We will provide a computer-assisted proof of \cref{lem:cycle-state} later on in \cref{sec:proof-cycle}. We remark that bounds \cref{item:1,item:3} above are loose; for ease of exposition we have chosen to use the weakest bounds that suffice to attain the required approximation ratio.

Our algorithm will output the state
\begin{equation}
    \rho = \left(\bigotimes_{C\in \mathcal{C}}(\rho_{|C|})_{V(C)}\right)\otimes\left(\bigotimes_{\{i,j\}\in M} (\ketbra{\EPR_\theta}{\EPR_\theta})_{i,j}\right)\otimes \left(\bigotimes_{i\in U}\left(\theta\ketbra{0}{0} + (1-\theta)\ketbra{1}{1}\right)_{i}\right),
\end{equation}
where $V(C)\subseteq[n]$ is the set of qubits in the cycle $C$ and $\rho_{|C|}$ is the state guaranteed by \cref{lem:cycle-state}. Similarly to \cref{sec:2-2}, we set $\gamma = \frac{\sqrt{5}-1}{4}$ and $\theta\geq 1/2$ such that $\sqrt{\theta(1-\theta)}=\gamma$. Let us analyze the energy achieved by this algorithm. We have
\[\tr(\rho \cdot E_{ij}) \geq \begin{cases}
    \frac{1+\sqrt{5}}{4}\quad &\text{ if }\{i,j\}\in M,\hfill (\cref{claim:bip-energy})\\
    \frac{3}{4}\cdot \frac{1+\sqrt{5}}{4}\quad &\text{ if }\{i,j\}\in C\text{ for some }C\in \mathcal{C},\hfill\qquad\qquad\qquad\qquad(\cref{lem:cycle-state}, \cref{item:1})\\
    \frac{1}{2}\cdot\frac{1+\sqrt{5}}{4}\quad &\text{ otherwise}.\hfill(\text{1-qubit marginals, }\cref{lem:cycle-state}, \cref{item:3})
\end{cases}\]
To elaborate on the last case, we note that it must be that either qubits $i$ and $j$ are nonadjacent vertices in the \emph{same} cycle $C\in \mathcal{C}$, or they are in tensor product. If the former is true, we use the guarantee of \cref{lem:cycle-state}, \cref{item:3}. Otherwise, we use the fact that all the $1$-qubit marginals are equal to $\theta\ketbra{0}{0} +(1-\theta)\ketbra{1}{1}$ along with the calculations from \cref{sec:2-2} to conclude that the energy is $\frac{1}{2}-\gamma^2=\frac{1+\sqrt{5}}{8}$.

Comparing with \cref{eqn:general-lp-energies}, we immediately get that for every edge $\{i,j\}\in \binom{[n]}{2}$,
\[\tr(\rho\cdot E_{ij})\geq \frac{1+\sqrt{5}}{4}\tilde{E}_{ij},\]
proving \cref{eqn:rounding-guarantee} with $\alpha = \frac{1+\sqrt{5}}{4}$. As a result, we obtain an approximation algorithm for the EPR Hamiltonian with approximation ratio $\frac{1+\sqrt{5}}{4}$, proving \cref{thm:main}.

\printbibliography
\appendix

\section{Proof of \cref{lem:cycle-state}}\label{sec:proof-cycle}

For the reader's convenience we restate \cref{lem:cycle-state} below. The code for verifying the numerical claims made in the below proof is available at \url{https://github.com/anshnagda/EPR-algorithm}.

\begin{lemma}[Restatement of \cref{lem:cycle-state}]
    For any integer $k$, there is an efficient algorithm to compute the description of a $k$-qubit state $\rho_k$ satisfying the following, where $\gamma = \frac{\sqrt{5}-1}{4}$ and $\theta\geq 1/2$ is the solution to $\sqrt{\theta(1-\theta)}=\gamma$.
    \begin{enumerate}
        \item\label{item:11} For all $i\in [k]$, $\tr(\rho_k\cdot E_{i,i+1(\textup{mod k})})>\frac{3}{4}\cdot\frac{1+\sqrt{5}}{4}$.
        \item\label{item:21} For all $i\in [k]$, the $1$-qubit marginal $\tr_{[k]-i}[\rho_k]$ equals $\theta\ketbra{0}{0}+(1-\theta)\ketbra{1}{1}$.
        \item\label{item:31} For all $i,j\in [k]$, $\tr(\rho_k\cdot E_{ij})>\frac{1}{2}\cdot \frac{1+\sqrt{5}}{4}$.
    \end{enumerate}
\end{lemma}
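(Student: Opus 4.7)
The plan is to build $\rho_k$ in two layers: a ``matching layer'' that creates strong entanglement across cycle edges, plus a small symmetric correction that introduces $\ket{00}$--$\ket{11}$ coherence on every pair. Concretely, start from the seed state
\[
\sigma_k^{(0)} \;=\; \bigotimes_{i=1}^{(k-1)/2}(\ketbra{\EPR_\theta}{\EPR_\theta})_{2i-1,2i}\;\otimes\;(\theta\ketbra{0}{0}+(1-\theta)\ketbra{1}{1})_k,
\]
which places tilted EPR pairs on a maximum matching of the $k$-cycle and leaves the last qubit in the correct 1-qubit marginal. Let $\sigma_k^{(s)}$ denote its cyclic shift by $s$, and define $\sigma_k=\frac{1}{k}\sum_{s=0}^{k-1}\sigma_k^{(s)}$. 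By construction $\sigma_k$ is translation-invariant, so its 1-qubit marginals are all $\theta\ketbra{0}{0}+(1-\theta)\ketbra{1}{1}$ and item~(\ref{item:21}) holds.

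Next I would compute the pair energies of $\sigma_k$ using \cref{claim:tiled_energy}. For a cycle edge $\{i,i+1\}$, qubits $i,i+1$ share a matched EPR pair in exactly $(k-1)/2$ of the $k$ shifts and lie in tensor product otherwise, yielding average energy $\frac{1}{2}+\frac{k-1}{2k}\gamma-\frac{k+1}{2k}\gamma^2$. For pairs at cycle distance $\geq 2$, qubits never co-occur in a matched pair, so the 2-qubit marginal is $(\theta\ketbra{0}{0}+(1-\theta)\ketbra{1}{1})^{\otimes 2}$ with energy $\tfrac{1}{2}-\gamma^2$. A routine simplification shows these equal $\tfrac{3}{4}\cdot\tfrac{1+\sqrt 5}{4}$ (in the $k\to\infty$ limit) and $\tfrac{1}{2}\cdot\tfrac{1+\sqrt 5}{4}$ (exactly), so $\sigma_k$ alone achieves the targets with \emph{equality} asymptotically but strictly fails them for finite~$k$.

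To get strict inequality I would mix $\sigma_k$ with an auxiliary translation-invariant state $\tau_k$ designed to inject a real $\ket{00}$--$\ket{11}$ coherence into every 2-qubit marginal; a natural choice is a convex combination of the pure ``tilted GHZ'' state $\sqrt{\theta}\ket{0^k}+\sqrt{1-\theta}\ket{1^k}$ with a Dicke-state mixture whose Hamming weights differ by two, adjusted so that the 1-qubit marginal of $\tau_k$ is still $\theta\ketbra{0}{0}+(1-\theta)\ketbra{1}{1}$. Taking $\rho_k=(1-\varepsilon_k)\sigma_k+\varepsilon_k\tau_k$ for a small $k$-dependent $\varepsilon_k>0$ preserves item~(\ref{item:21}), strictly improves item~(\ref{item:31}) on every non-adjacent pair, and boosts item~(\ref{item:11}) by the coherence contribution on cycle edges. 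For $k$ large this succeeds by a continuity argument since the $\sigma_k$ values converge to the targets.

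The main obstacle is the smallest odd cycles, especially $k=3$ (and to a lesser extent $k=5,7$), where $\sigma_k$'s cycle-edge energy is bounded away from the target on the wrong side and a tiny perturbation does not obviously close the gap. Here I would fall back to explicitly constructing $\rho_k$ via a small semidefinite program: maximize the minimum cycle-edge EPR overlap subject to the 1-qubit marginal constraints and a lower bound on every non-adjacent pair energy, then verify numerically that the optimum strictly exceeds $\tfrac{3}{4}\cdot\tfrac{1+\sqrt 5}{4}$. Thus the proof naturally splits into an analytical large-$k$ regime and a finite computer-verified check for the first few odd $k$, consistent with the computer-assisted nature the authors advertise.
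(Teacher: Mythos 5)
Your base construction $\sigma_k$ and its analysis are correct, and your limit computation nicely explains why $\gamma=\frac{\sqrt{5}-1}{4}$ is the right tilt: the shift-averaged matching state attains cycle-edge energy $\frac{3}{4}\cdot\frac{1+\sqrt{5}}{4}-\frac{\gamma+\gamma^2}{2k}$, i.e.\ it misses the target by $\Theta(1/k)$ \emph{from below}. The gap is in the correction step. Mixing in $\varepsilon_k\tau_k$ changes a cycle-edge energy by $\varepsilon_k\bigl(\tr(\tau_k E_{i,i+1})-\tr(\sigma_k E_{i,i+1})\bigr)$, not by an additive ``coherence contribution,'' so to close a $\Theta(1/k)$ deficit you need $\tau_k$ itself to strictly exceed $\frac{3}{4}\cdot\frac{1+\sqrt{5}}{4}\approx 0.607$ on cycle edges. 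But your $\tau_k$ is a convex combination of permutation-symmetric states (tilted GHZ and Dicke superpositions), hence permutation-symmetric, so all $\binom{k}{2}$ of its pair energies are equal; \cref{lem:star-bound} then forces that common value to be at most $\frac{1}{2}+\frac{1}{2(k-1)}$, which is below $0.607$ for every $k\geq 7$. So for all large $k$ the proposed perturbation strictly \emph{decreases} the cycle-edge energy and cannot repair item 1. (Separately, the tilted GHZ state contributes zero $\ket{00}$--$\ket{11}$ coherence to any $2$-qubit marginal, since tracing out a third qubit annihilates the off-diagonal block, so the entire burden falls on the unspecified Dicke piece.) The ``continuity argument'' is likewise backwards: convergence to the target from below does not yield strict inequality.

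What is needed is a correction that is cycle-structured rather than symmetric: it must beat the target on \emph{adjacent} pairs by a constant while keeping the prescribed $1$-qubit marginals. This is exactly what the paper does. It numerically finds (by SDP) a single $5$-qubit path state $\psi$ with $1$-qubit marginals $\theta\ketbra{0}{0}+(1-\theta)\ketbra{1}{1}$ whose four path edges have average energy $0.668>0.607$, sets $\rho'_k=\ketbra{\EPR_\theta}{\EPR_\theta}^{\otimes(k-5)/2}\otimes\psi$, and shift-averages; the constant surplus of the patch absorbs the deficit contributed by the single low-energy ``seam'' edge. Your fallback SDP for small $k$ coincides with the paper's treatment of $k\leq 5$, but for $k\geq 7$ your analytic argument does not close and must be replaced by something like this patched construction.
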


\begin{proof}
    For $k\leq 5$, we numerically verify the claim by explicitly solving an SDP. 

    Similarly, one can numerically verify that there exists a $5$-qubit state $\psi$ satisfying the following:

    \begin{enumerate}
        \item $\frac{1}{4}\cdot \sum_{i\in [4]}\tr(E_{i,i+1}\cdot \psi)\geq 0.668$.
        \item For all $i,j\in [5]$, $\tr(E_{ij}\cdot \psi)> \frac{1}{2}\cdot \frac{1+\sqrt{5}}{4}$.
        \item For all $i\in [5]$, the $1$-qubit marginal $\tr_{[5]-i}[\psi]$ equals $\theta\ketbra{0}{0}+(1-\theta)\ketbra{1}{1}$.
    \end{enumerate}
    
    Now assume $k\geq 7$. Define the state 
        \[\rho'_k = \ketbra{\EPR_\theta}{\EPR_\theta}^{\otimes \frac{k-5}{2}}\otimes \psi.\]
    Let $\textsf{Shift}_i$ be the unitary that shifts qubits in the $k$-cycle $i$ spots. We will set $\rho_k$ to be the shift-invariant state 
    \[\rho_k = \E_{i\sim [k]}\left[\textsf{Shift}_i\cdot \rho'_k\cdot \textsf{Shift}_i^\dagger\right]\;.\]

    We will verify that $\rho_k$ satisfies the three requirements.

    \begin{enumerate}
        \item Let $i\in [k]$. We can write the energy $\tr(\rho_k\cdot E_{i,i+1(\textup{mod k})})$ as the average energy of a random edge $\{j,j+1(\text{mod }k)\}$ under $\rho'_k$. By definition of $\rho'_k$, one can compute
        \[\tr(\rho'_k\cdot E_{j,j+1(\textup{mod k})})\geq\begin{cases}
            \frac{1+\sqrt{5}}{4} \qquad &j\leq k-5\text{ odd},\\
            \frac{1+\sqrt{5}}{8}\qquad &j\leq k-5\text{ even or }j=k,\\
            0.668\qquad &k-5<j<k.
        \end{cases}\]
        Therefore
        \[\tr(\rho_k\cdot E_{i,i+1(\textup{mod k})}) = \frac{\frac{k-5}{2}\cdot \frac{1+\sqrt{5}}{4} + \frac{k-5}{2}\cdot \frac{1+\sqrt{5}}{8} +  4\cdot 0.668+\frac{1+\sqrt{5}}{8}}{k}\;.\]
        One can verify that $4\cdot 0.668+\frac{1+\sqrt{5}}{8} > 5\cdot \frac{3}{4}\cdot \frac{1+\sqrt{5}}{4}$, implying 
        \[\tr(\rho\cdot E_{i,i+1(\textup{mod k})}) >\frac{(k-5)\cdot\frac{3}{4}\cdot\frac{1+\sqrt{5}}{4}+5\cdot \frac{3}{4}\cdot \frac{1+\sqrt{5}}{4} }{k}=\frac{3}{4}\cdot\frac{1+\sqrt{5}}{4}\;.\]        
        \item It suffices to prove that the $1$-qubit marginals of $\rho'_k$ are equal to $\theta\ketbra{0}{0}+(1-\theta)\ketbra{1}{1}$. For qubits $i\leq k-5$, this follows from \cref{claim:tiled_energy}, and for $i > k-4$, this follows by definition of $\psi$.
        \item Let $i\neq j$. It suffices to prove $\tr(\rho'_k\cdot E_{ij})>\frac{1}{2}\cdot \frac{1+\sqrt{5}}{4}$ the same for $\rho_k'$. We consider three cases.
        \begin{itemize}
            \item If $j=i+1$ for some odd $i\leq k-5$, this is implied by \cref{item:11}.
            \item If $k-5<i,j\leq k$, this is implied by the definition of $\psi$.
            \item Otherwise, the 2-qubit marginal $\tr_{[n]\setminus \{i,j\}}(\rho'_k)$ equals $(\theta \ketbra{0}{0} +(1-\theta)\ketbra{1}{1})^{\otimes 2}$, and this is implied by \cref{claim:tiled_energy}.
        \end{itemize}
    \end{enumerate}
    
\end{proof}

\end{document}